\title{Classical versus Quantum Graph-based~Secret~Sharing}
\author{J\'er\^ome Javelle\textsuperscript{\small 1,}\footnote{Jerome.Javelle@imag.fr}~, Mehdi Mhalla\textsuperscript{\small 2,1,}\footnote{Mehdi.Mhalla@imag.fr}~, Simon Perdrix\textsuperscript{\small 2,1,}\footnote{Simon.Perdrix@imag.fr}}		
\date{\small \textsuperscript{1} LIG, University of Grenoble, France\\  \textsuperscript{2} CNRS }
\newtheorem{thm}{Theorem}
\newtheorem{lem}{Lemma}
\newtheorem{dfn}{Definition}
\newtheorem{property}{Property}
\begin{document}

\maketitle

\setlength{\parskip}{0.5em}

\begin{abstract}
We study a simple graph-based classical secret sharing scheme: every player's share consists of a random key together with the encryption of the secret with the keys of his neighbours. A characterisation of the authorised and forbidden sets of players is given. Moreover, we show that this protocol is equivalent to the  \emph{graph state quantum secret sharing} (GS-QSS) schemes \cite{MS08,KMMP,JMP11}  when the secret is classical. When the secret is an arbitrary quantum state, a set of players is authorised  for a GS-QSS  scheme if and only if, for the corresponding simple classical graph-based protocol,   the set is authorised and  its complement set is  not.

\end{abstract}

\section{Introduction}
Quantum secret sharing protocols \cite{CG99,G00} are quantum extensions of the classical secret sharing protocols \cite{B,Shamir}. They consists in encoding a secret into a multipartite quantum state. Each of the players of the protocol has a subpart of this quantum system, called a \emph{share}. Authorised sets of players are those that can recover collectively the secret. The encrypted secret can be a quantum state or a  classical message. 

In the literature, several quantum secret schemes have been introduced \cite{MS08}. In particular Markham and Sander have introduced QSS schemes based on graph states: the secret is encoded into a graph state, i.e. a quantum state which is characterised by a graph. Every vertex of the graph represent a player. Both classical and quantum secrets are considered in the graph state quantum secret sharing schemes. In \cite{MS08}, connections between the authorised sets for a classical secret and  the authorised sets for a quantum secret have been established. Recently, in \cite{JMP11},  a graphical characterisation of authorised and forbidden sets of players have been introduced in both cases of a classical and a quantum secret.

In this note, we study a  family of graph-based secret sharing protocols. Given a simple undirected graph and a given secret, every player's share is a pair which consists of a random key  together with the encryption of the secret by modular addition with the keys of the neighbour players in the graph (see the section \ref{sec:gss}).  
In section \ref{sec:gs-qss}, we show that the access structure  of the classical graph-based protocol coincides  with the access structure of the GS-QSS when the secret is classical. As a consequence, whenever the secret is classical, any  GS-QSS can be simulated by a simple classical scheme. 
Moreover, we point out the connections between the GS-QSS with quantum secret and the classical graph based protocols: the authorised sets of players are those which are authorised for the classical protocol for the same graph and its complement.

\section{Notations}

For a given classical or quantum secret sharing protocol over $n$ players, a subset of players is \emph{authorised} if the players of the subset can recover collectively the secret. A subset of players is \emph{forbidden} if they have no information about the secret. Notice that a third kind of sets of players may exist, those who have some partial information about the secret. The description of the authorised and forbidden sets is called the \emph{accessing structure} of the protocol. 

In this paper, the protocols are characterised by simple undirected\footnote{$G=(V,E)$ is a simple undirected graph if $\forall u\in V, (u,u)\notin E$ and  $(u,v)\in E\Rightarrow (v,u)\in E$} graphs. For a given graph $G=(V,E)$ and for any vertex $u\in V$,  $\mathcal N(u):=\{v\in V ~|~ (u,v)\in E\}$ denotes the neighbourhood of $u$ in $G$ ;  for any $D\subseteq V$, $Odd(D):=\{v\in V ~|~ |\mathcal N(v) \cap D| = 1 \mod 2\}$ is called the \emph{odd-neighbourhood} of $D$ in $G$. 
Notice that $Odd(D)=\bigtriangleup_{u\in D} \mathcal N(u)$, where $\triangle$ is the symmetric difference ($A\triangle B = (A\cup B)\setminus (A\cap B)$). 
For a given subset $B\subseteq V$, let $\overline B = V\setminus B$ be its complement. For a given graph $G=(V,E)$, let $\overline G= (V, \overline E\setminus \{(u,u) ~Ê|~u \in V\} )$  be its complement graph.

\section{A graph-based classical protocol}\label{sec:gss}

In this section, we consider a family of classical secret sharing protocols, each of these protocols is parameterised by a graph.  For a given simple undirected graph $G = (V, E )$, each vertex $i \in V$ represents a player. The secret to share is a bit $s\in \{0,1\}$. Each player receives the secret one-time padded by the keys of his neighbours. Formally, the classical graph-based secret sharing (GSS) protocol is defined as follows:

\paragraph{Sharing the secret.}
\begin{itemize}
\item For each player $i$, pick a bit $k_i$ uniformly at random in $\left\{0, 1\right\}$.
\item For each $i$, compute the value $ c_i = s + \sum_{{i'} \in \mathcal{N}( i)} k_{i'} \mod 2$.
\item Give player $i$ the couple $(k_i, c_i)$.
\end{itemize}

\noindent An example of GSS protocol is given in Figure \ref{fig:P5}. 

\begin{figure}[h]
\vspace{-1.5cm}
\psfrag{k1c1}[c][][1]{\small$(k_1,s+k_2)$}
\psfrag{k2c2}[c][][1]{\small$(k_2,s+k_1+k_3)$}
\psfrag{k3c3}[c][][1]{\small$(k_3,s+k_2+k_4)$}
\psfrag{k4c4}[c][][1]{\small$(k_4,s+k_3+k_5)$}
\psfrag{k5c5}[c][][1]{\small$(k_5,s+k_4)$}

		\centerline{\includegraphics[scale=0.5]{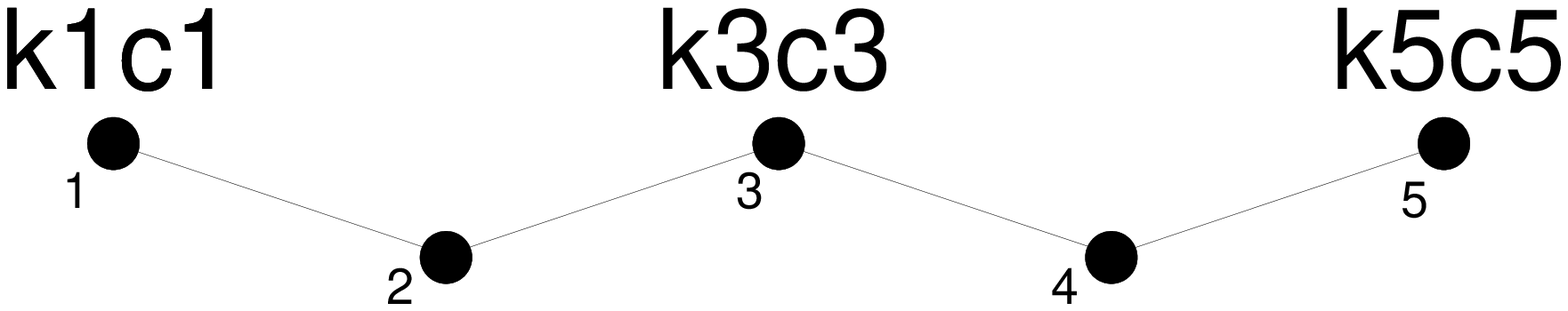} }

\vspace{-1.5cm}

\psfrag{k1c1}[c][][1]{\small$(0,0)$}
\psfrag{k2c2}[c][][1]{\small$(1,0)$}
\psfrag{k3c3}[c][][1]{\small$(1,0)$}
\psfrag{k4c4}[c][][1]{\small$(0,1)$}
\psfrag{k5c5}[c][][1]{\small$(1,1)$}

		\centerline{\includegraphics[scale=0.5]{P5.eps} }
		
		\vspace{-1.5cm}
\caption{\label{fig:P5}Top: The GSS scheme when the graph is a $P_5$. The sums are modulo $2$. Bottom: The same protocol  for $s=1$ and some particular values of $k_i$'s. }
\end{figure}

\paragraph{Recovering the secret.} Since every player has the secret encrypted using the keys of his neighbours, it comes that each player together with his neighbours can recover the secret. So, in the  example given in Figure \ref{fig:P5}, any superset of the following sets of players is authorised: $\{1,2\}, \{2,3,4\}, \{4,5\}$. 
But not all the authorised sets are of that kind:  if the players $1, 3$ and $5$ add up their encrypted secret, the resulting bit is $c_{1}+c_{3}+c_{5} = 3s+2k_2+2k_4 =s \mod 2$, so the set $\{1,3,5\}$ is also an authorised set. 
More generally, we consider the following sets of players and we show that they are authorised sets.

\begin{dfn}
\label{independant}
Given a graph $G=(V,E)$, a set $B \subseteq V$ is \emph{c-accessing} if and only if:
\begin{align}
\exists D \subseteq B &, 
\left\{
\begin{array}{c}
D \cup Odd(D) \subseteq B \\
|D| = 1 \bmod 2
\end{array}
\right.
\end{align}
\end{dfn}

For a given graph $G=(V,E)$, let  $B\subseteq V$ be a c-accessing set and let $D\subseteq B$ such that $|D|=1 \mod 2$ and $Odd(D)\subseteq B$. In the following, we show that the players in $B$ can recover the secret by computing $\sum_{i\in D} c_i + \sum_{j\in Odd(D)} k_j\mod 2$.

\begin{eqnarray*}
\sum_{i\in D} c_i + \sum_{j\in Odd(D)} k_j &=& \sum_{i \in D} \left(s + \sum_{{i'} \in \mathcal{N}( {i})}k_{i'} \right) + \sum_{j \in Odd(D)}k_j\\
 &=&|D|.s +\sum_{i \in D, {i'} \in \mathcal{N}( {i})}k_{i'}  + \sum_{j \in Odd(D)}k_j\\
  &=&|D|.s +\sum_{i' \in V}|\mathcal N(i')\cap D|.k_{i'}  + \sum_{j \in Odd(D)}k_j\\
  &=& s\mod 2
\end{eqnarray*}

\noindent As a consequence:

\begin{property}\label{prop:AccSufficient}
Given a graph $G=(V,E)$, any set $B\subseteq V$ of players which is {c-accessing} is an authorised set.
\end{property}

In the rest of the section, we show that any set which is not c-accessing is  forbidden. As a consequence, the c-accessible sets provide a characterisation of the authorised sets. It also proves that for any graph, the protocol is \emph{perfect}, i.e. any set of players is either authorised or forbidden. 

The proof is using the following characterisation of the sets which are not c-accessing, proved in \cite{JMP11}, stating that a set is not c-accessing if and only if it is oddly-dominated by a subset of its complement:

\begin{lem}[\cite{JMP11}]\label{lem:setC}
Given a graph $G=(V,E)$, $B\subseteq V$ is not c-acessing if and only if \begin{equation}\exists C\subseteq \overline B, Odd(C)\supseteq B\end{equation}
\end{lem}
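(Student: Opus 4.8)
The plan is to recast the whole statement as linear algebra over $\mathbb{F}_2$. Identify every subset $S\subseteq V$ with its characteristic vector in $\mathbb{F}_2^V$, let $\Gamma$ be the adjacency matrix of $G$ over $\mathbb{F}_2$, and observe that $Odd(S)$ is represented by $\Gamma S$. With this dictionary, a set $D\subseteq B$ satisfies $D\cup Odd(D)\subseteq B$ exactly when $D$, viewed as a vector supported on $B$, lies in the kernel of the submatrix $\Gamma_{\overline B,B}$ (rows indexed by $\overline B$, columns indexed by $B$), and $|D|=1\bmod 2$ exactly when $\langle \mathbf{1}_B,D\rangle=1$, where $\mathbf{1}_B$ is the all-ones vector on $B$ and $\langle\cdot,\cdot\rangle$ is the standard bilinear form on $\mathbb{F}_2^B$. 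Hence $B$ is c-accessing if and only if the linear functional $\langle \mathbf{1}_B,\cdot\rangle$ does not vanish identically on $\ker\Gamma_{\overline B,B}$.

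Next I would invoke the standard duality for subspaces of $\mathbb{F}_2^B$ equipped with the (non-degenerate) dot product: for any matrix $M$ with column set $B$ one has $(\ker M)^\perp=\mathrm{rowspace}(M)$. Applying this with $M=\Gamma_{\overline B,B}$, the functional $\langle \mathbf{1}_B,\cdot\rangle$ vanishes on $\ker\Gamma_{\overline B,B}$ if and only if $\mathbf{1}_B\in\mathrm{rowspace}(\Gamma_{\overline B,B})$. Therefore $B$ is \emph{not} c-accessing if and only if $\mathbf{1}_B$ is a sum of rows of $\Gamma_{\overline B,B}$, say the rows indexed by a set $C\subseteq\overline B$. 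Finally I would translate this back: the sum over $v\in C$ of the $v$-th row of $\Gamma_{\overline B,B}$ is, coordinatewise on $u\in B$, equal to $|\mathcal N(u)\cap C|\bmod 2$, i.e.\ it is the characteristic vector of $Odd(C)\cap B$; so this sum equals $\mathbf{1}_B$ precisely when $B\subseteq Odd(C)$. Chaining the three equivalences gives the claim.

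For readability I would also include a self-contained double-counting argument for the easy implication ($\Leftarrow$): given $C\subseteq\overline B$ with $B\subseteq Odd(C)$ and any $D\subseteq B$ with $Odd(D)\subseteq B$, count the parity of the number of edges between $C$ and $D$; grouping by endpoints in $D$ gives $|D\cap Odd(C)|=|D|$, while grouping by endpoints in $C$ gives $|C\cap Odd(D)|=0$, forcing $|D|$ even, so no odd witness $D$ exists and $B$ is not c-accessing.

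The genuinely non-trivial content, and the step I expect to be the main obstacle, is the reverse direction: from the mere non-existence of an oddly-dominating $C\subseteq\overline B$ one must manufacture an actual witness $D\subseteq B$ with $Odd(D)\subseteq B$ and $|D|$ odd. This is exactly what the $\mathbb{F}_2$ perp-perp duality buys us — equivalently, that $\mathbf{1}_B\notin\mathrm{rowspace}(\Gamma_{\overline B,B})$ forces the existence of an odd-weight vector in $\ker\Gamma_{\overline B,B}$ supported in $B$ — and I would be careful to state the non-degeneracy of the form explicitly, since that is precisely what makes $(\ker M)^\perp=\mathrm{rowspace}(M)$ valid.
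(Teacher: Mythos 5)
Your argument is correct, and it is worth noting that the paper itself offers no proof of this lemma at all: it is imported verbatim from \cite{JMP11}, so your write-up is strictly more self-contained than the text it would replace. The dictionary you set up is accurate ($D\subseteq B$ with $Odd(D)\subseteq B$ iff $D\in\ker\Gamma_{\overline B,B}$; parity of $|D|$ as the functional $\langle\mathbf 1_B,\cdot\rangle$), and the crux is exactly where you locate it: over $\mathbb{F}_2$ the standard form is non-degenerate even though it is isotropic, so $\dim W+\dim W^\perp=|B|$ still holds, and combined with rank--nullity this gives $(\ker\Gamma_{\overline B,B})^\perp=\mathrm{rowspace}(\Gamma_{\overline B,B})$; membership of $\mathbf 1_B$ in that row space is precisely the existence of $C\subseteq\overline B$ with $B\subseteq Odd(C)$, since the sum of the rows indexed by $C$ is the indicator of $Odd(C)\cap B$. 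Your supplementary double-counting argument for the easy direction is also sound ($|D\cap Odd(C)|=|D|$ versus $|C\cap Odd(D)|=0$ forces $|D|$ even). This algebraic route is in the same spirit as the $\mathbb{F}_2$ techniques used in \cite{JMP11}, so you are not diverging from the intended proof so much as supplying the proof the present note omits; if anything, spelling out the symmetry of $\Gamma$ (used implicitly when you read the row sum as $|\mathcal N(u)\cap C|\bmod 2$) would make the translation step airtight.
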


\begin{thm}
\label{AccNecessary}
Given a graph $G=(V,E)$, any set $B\subseteq V$ of players which is not c-accessible is forbidden.
\end{thm}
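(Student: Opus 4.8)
The goal is to show that the joint distribution of the shares $\{(k_i,c_i)\}_{i\in B}$ held by the players of $B$ does not depend on the secret $s$; this is exactly what ``$B$ is forbidden'' means. Since $B$ is not c-accessing, Lemma~\ref{lem:setC} hands us a set $C\subseteq\overline B$ with $Odd(C)\supseteq B$, i.e. $|\mathcal N(i)\cap C|=1\bmod 2$ for every $i\in B$. The plan is to use $C$ to build a secret-independence argument directly on the randomness of the protocol.

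First I would introduce the ``key-flip'' map $\phi\colon\{0,1\}^V\to\{0,1\}^V$ defined by $\phi(k)_j=k_j+1$ if $j\in C$ and $\phi(k)_j=k_j$ otherwise. This is an involution, hence a bijection, and it therefore preserves the uniform distribution on the key vectors $(k_1,\dots,k_n)$ used in the sharing phase.

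Next I would check that, for any key vector $k$, running the protocol with secret $s$ and keys $k$ produces exactly the same $B$-shares as running it with secret $s+1$ and keys $\phi(k)$. For the key part: if $i\in B$ then $i\notin C$ (because $C\subseteq\overline B$), so $\phi(k)_i=k_i$, and the first component of player $i$'s share is unchanged. For the ciphertext part, with $i\in B$ the new value is
\begin{align*}
(s+1)+\sum_{i'\in\mathcal N(i)}\phi(k)_{i'}
&=(s+1)+\sum_{i'\in\mathcal N(i)}k_{i'}+|\mathcal N(i)\cap C|\\
&=s+\sum_{i'\in\mathcal N(i)}k_{i'}\pmod 2,
\end{align*}
where the last equality uses $|\mathcal N(i)\cap C|=1\bmod 2$, which holds precisely because $B\subseteq Odd(C)$. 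So $c_i$ is unchanged as well.

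Finally I would conclude: the distribution of $\{(k_i,c_i)\}_{i\in B}$ under secret $s$ is the pushforward of the uniform key distribution, and composing with the measure-preserving bijection $\phi$ shows it equals the corresponding distribution under secret $s+1$. Hence the two values of the secret are statistically indistinguishable to the players of $B$, so $B$ is forbidden. Together with Property~\ref{prop:AccSufficient} this yields the announced characterisation and the perfectness of the scheme. The only non-routine step is spotting that $C$ from Lemma~\ref{lem:setC} is exactly the support of the required key-flip involution; once that is in place, the two verifications above are immediate mod-$2$ computations.
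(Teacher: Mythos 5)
Your proof is correct and follows essentially the same route as the paper: both invoke Lemma~\ref{lem:setC} to obtain $C\subseteq\overline B$ with $Odd(C)\supseteq B$, and both rest on the observation that flipping the keys on $C$ compensates for flipping the secret while leaving every share of $B$ unchanged. Your packaging of this as a measure-preserving involution on the key space is simply a cleaner formulation of the reindexing over $k_C$ that the paper performs inside its conditional-probability computation.
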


\begin{proof}
Let $B$ be a set of players which is not c-accessible. The players in $B$ share collectively the following bits: $\big((k_i, c_i)\big)_{i\in B}$. Let $k_B=(k_i)_{i\in B}$ and $c_B=(c_i)_{i\in B}$. We want to show that the players in $B$ have no information about the secret, i.e. $P(s | k_B,c_B)=P(\bar s|k_B,c_B)$. 

Notice that $P(\bar s | k_B,{c_B})=P(s|k_B,\overline{c_B})$ since $\forall i, \overline{c_i}=\bar s+\sum_{j\in \mathcal N(i)} k_i$. Moreover, $P(s|k_B,c_B)=\frac{P(s,c_B|k_B)}{P(c_B|k_B)}$. 
According to Lemma \ref{lem:setC}, there exists $C\subseteq \overline B$ s.t. $Odd(C)\supseteq B$, i.e. $\forall i\in B, |\mathcal N (i)\cap C| = 1\mod 2$.  For any $i\in B$, 

\begin{eqnarray*}
 \overline{c_i}&=& 1+ s+\sum_{j\in \mathcal N(i)}k_i \mod 2\\
&=&1+s+\sum_{j\in \mathcal N(i)\cap  C}k_i +\sum_{j\in \mathcal N(i)\cap  \overline C}k_i \mod 2\\
& = &1+s+|\mathcal N(i)\cap C|+\sum_{j\in \mathcal N(i)\cap  C}{(k_i-1)}+\sum_{j\in \mathcal N(i)\cap  \overline C}k_i \mod 2\\
& = &s+\sum_{j\in \mathcal N(i)\cap  C}\overline{ k_i}+\sum_{j\in \mathcal N(i)\cap  \overline C}k_i \mod 2\\
\end{eqnarray*}

As a consequence, $P(\overline{c_B}|k_B,k_C)=P(c_B|k_B,\overline{k_C})$ and $P(s,\overline{c_B}|k_B,k_C)=P(s,c_B|k_B,\overline{k_C})$. Hence,

\begin{eqnarray*}
P(\bar s|k_B,c_B)&=&P(s|k_B,\overline{c_B})\\
&=&\frac{P(s,\overline{c_B}|k_B)}{P(\overline{c_B}|k_B)}\\
&=&\frac{\sum_{k_C\in \{0,1\}^{C}} P(k_C)P(s,\overline{c_B}|k_B,k_C)
}{\sum_{k_C\in \{0,1\}^{C}} P(k_C)P(\overline{c_B}|k_B,k_C)}\\
&=&\frac{\sum_{k_C\in \{0,1\}^{C}} \frac1{2^{|C|}}P(s,{c_B}|k_B,\overline{k_C})
}{\sum_{k_C\in \{0,1\}^{C}} \frac1{2^{|C|}}P({c_B}|k_B,\overline{k_C})}\\
&=&\frac{\sum_{k_C\in \{0,1\}^{C}} \frac1{2^{|C|}}P(s,{c_B}|k_B,{k_C})
}{\sum_{k_C\in \{0,1\}^{C}} \frac1{2^{|C|}}P({c_B}|k_B,{k_C})}\\
&=&\frac{P(s,{c_B}|k_B)}{P({c_B}|k_B)}\\
&=&P(s|k_B,c_B)
\end{eqnarray*}
\end{proof}

Property \ref{prop:AccSufficient} and Theorem \ref{AccNecessary} provide a characterisation of accessible sets.
Moreover, any set which is not c-accessible cannot learn any information about the secret.

\section{Graph state quantum secret sharing}\label{sec:gs-qss}

Secret sharing with graph states (GS-QSS) has been introduced by Markham and Sanders in \cite{MS08}. For a given graph of order $n$, the secret -- which can be either classical or quantum -- is encoded into the n-partite quantum state described by the graph, the so called graph state \cite{HEB04}. Then each of the $n$ players receives one qubit of the quantum state. 

It has been shown in \cite{JMP11} that, when the secret is a classical bit, every subset of players is either authorised or forbidden, and that the authorised sets are the c-accessing sets in the corresponding graph i.e., the subsets $B$ of players such that $\exists D\subseteq B$, $|D|=1\mod 2$ and $Odd(D)\subseteq B$. As a consequence, 

\begin{property}
Given a graph $G$, when the secret is a classical bit, the accessing structure of the GS-QSS scheme characterised by $G$ coincides with the accessing structure of the GSS characterised by $G$. \end{property}

Thus, when the secret is classical, any GS-QSS protocol can be simulated by a GSS protocol, which is  simple classical protocol that consists in sending each player only two  bits. It shows, when the secret is classical, there is no benefit to use a  graph state quantum protocol rather than a fully classical protocol.

When the secret is an arbitrary quantum state,  an interesting reduction to the classical secret case has been shown in \cite{MS08}: given a graph $G$, a set of players is authorised for a quantum secret in the GS-QSS protocol characterised by $G$ if and only if this set of players is authorised in both $(i)$ the GS-QSS protocol characterised by $G$ when the secret is classical, and $(ii)$ the GS-QSS protocol  characterised by $\overline G$ when the secret is classical. As a consequence:

\begin{property}
When the secret is an arbitrary quantum state, the authorised  sets of players in a GS-QSS scheme for a graph $G$  are those which are authorised in  the two particular instances $G$ and $\overline  G$ of the GSS protocol. 
\end{property}

In \cite{JMP11}, it has been proven that a set $B$ of players is c-accessible in both $G$ and $\overline G$ if and only if $B$ is c-accessible in $G$ and $\overline B$ is not c-accessible in $G$. As a consequence, the accessing structure of a GSS protocol provides a full characterisation of the authorised sets in the corresponding GS-QSS protocol:

 \begin{property}
When the secret is an arbitrary quantum state, a set $B$ of players is authorised  in a GS-QSS scheme for a graph $G$  if and only if $B$ is authorised and $\overline B$ is forbidden in the GSS scheme for $G$. 
\end{property}

\section{Conclusion}

In this note, we characterise the accessing structure of a simple graph-based secret sharing (GSS) protocol. Moreover, we show that this simple protocol is strongly related to the study of the graph-state quantum secret sharing (GS-QSS) protocols. We point out that, when the secret is classical, any GS-QSS scheme can be simulated by a GSS protocol. Moreover, when the secret is an arbitrary quantum state, the accessing structure of a GSS provides a full characterisation of the authorised sets in the corresponding GS-QSS: an authorised set in the quantum case is a set which is authorised in the classical case and such that its complement set is forbidden in the classical case.

\section*{Acknowledgments}
This work is partially supported by the CNRS PEPS project GraphIQ.

\end{document}